\newtheorem{corollary}{Corollary}
\begin{document}
%
\title{Joint Block Low Rank and Sparse Matrix Recovery in Array Self-Calibration Off-Grid DoA Estimation}
%
%
%

\author{Cheng-Yu~Hung,~\IEEEmembership{Member,~IEEE,}
        and~Mostafa~Kaveh,~\IEEEmembership{Life~Fellow,~IEEE}
\thanks{C. Y. Hung was with the Department
of Electrical and Computer Engineering, University of Minnesota - Twin Cities, Minneapolis,
MN, 55455 USA e-mail: hungx086@umn.edu.}
\thanks{M. Kaveh is with University of Minnesota. E-mail: mos@umn.edu.}
}

\maketitle

\begin{abstract}
This letter addresses the estimation of directions-of-arrival (DoA) by a sensor array using a sparse model in the presence of array calibration errors and off-grid directions.  The received signal utilizes previously used models for unknown errors in calibration and structured linear representation of the off-grid effect.  A convex optimization problem is formulated with an objective function to promote two-layer joint block-sparsity with its second-order cone programming (SOCP) representation. The performance of the proposed method is demonstrated by numerical simulations and compared with the Cramer-Rao Bound (CRB), and several previously proposed methods.
\end{abstract}

\begin{IEEEkeywords}
Self-calibration, off-grid, nuclear norm, low rank,  sparsity.
\end{IEEEkeywords}

%
\IEEEpeerreviewmaketitle

\section{Introduction}
%
%
%
%

\IEEEPARstart{A}{rray} signal processing, in general, and estimation of the Directions-of-Arrivals (DoA), in particular, require the spatial signatures of incident waves from angles of interest. The spatial signatures are usually obtained through array calibration. However, maintaining calibration may be difficult due to array element gain and/or phase changes caused by variations in environmental conditions or relative element positions. Self-calibration has been suggested as a way to mitigate perturbations to the array signature model \cite{astely1999spatial}.

Examples of array processing degradations due to unknown calibration errors can be found in  
\cite{wax1996performance, yang1995effects}. In \cite{fuhrmann1994estimation}, sensor gain and phase errors are estimated based on the assumption of perfect knowledge of the DoAs. In \cite{soon1994subspace}, an alternating approach is used to estimate unknown
gains and phases, and the DoAs. An eigenstructure-based (ES) method is developed in \cite{weiss1990eigenstructure} to estimate the calibration errors and the DoAs. In \cite{ling2015self}, the SparseLift method is proposed to solve a biconvex compressed sensing problem for the joint estimation of calibration errors and DoAs in the single measurement vector (SMV) model when the unknown directions belong to the set of angles in the search grid, the so-called on-grid model. In \cite{hung2017Low}, the model of \cite{ling2015self} is extended to the situation of multiple measurement snapshots, or the multiple measurement vector
(MMV) model, which naturally results in improvement of the accuracy of DoA estimates. This is accomplished by solving a modified nuclear norm minimization problem together with singular value decomposition (SVD) to reduce computational complexity.

When the compressed sensing or the sparsity frameworks are used, the quality of the sensing or search model matrices can also significantly impact the accuracy of the resulting estimators. For example, the effects of basis mismatch in compressed sensing is investigated in \cite{chi2011sensitivity}. Performance degradation of structured perturbations on DoA estimation for sparse models, which is called the off-grid effect, is studied in \cite{yang2012robustly,zheng2013sparse,tan2014joint,hung2014directions}. In \cite{tan2014joint,hung2017smoothed}, iterative algorithms are developed for off-grid DoA estimation. The off-grid effect on DoA estimation can be avoided by using the super-resolution framework in a continuous-domain manner \cite{tan2014direction,hung2016super}, but the DoA resolution performance is limited by the number of array elements \cite{candes2014towards,tang2012compressive}.


In this letter, the work in \cite{hung2017Low} is extended from the on-grid array self-calibration model to the more practical off-grid one. In contrast to \cite{hung2017Low} and \cite{ling2015self}, the two uncertainties
mentioned above, i.e. unknown array gain and phase responses and the off-grid effect are jointly modeled for the received signals for the general case of multiple measurement vectors (MMV). 
Using the perturbation structure for off-grid DoAs \cite{zhu2011sparsity} and multiple
measured snapshots in the self-calibration model, a new objective function is proposed to formulate a convex optimization problem. We give the second-order cone programing (SOCP) \cite{boyd2009convex,ben2001lectures} representation such that optimal solution can be obtained by using the interior point method. The performance of the proposed method is demonstrated by
numerical simulations and compared with the Cramer-Rao Bound (CRB) \cite{friedlander2014array}, the ES method \cite{weiss1990eigenstructure}, Ling’s method \cite{ling2015self}, and the MMV-SC method \cite{hung2017Low}.

\section{Signal Model for Self-Calibration}

\subsection{MMV Model for Self-Calibration with DoA Estimation}
Consider the DoA estimation problem with a uniform linear array (ULA) of $M$ sensors, and $L$ snapshots. Suppose there are $K$ far-field narrowband plane waves impinging on the array from angles $\theta_1, \dots, \theta_K$. The self-calibration MMV model \cite{hung2017Low} is expressed as  
\begin{align} \label{multi-snap}
{\bf Y} ={\bf DA}{{\bf S}} +{\bf N},  ~~~~{\bf D}=\text{diag}(\bf Bh) 
\end{align}
where ${\bf Y}=[{\bf y}_1,\cdots,{\bf y}_L]\in \mathbb{C}^{M\times L}$ is the observation matrix. The measurement matrix ${\bf A}=[{\bf a}(\theta_1),\cdots,{\bf a}(\theta_K)]\in \mathbb{C}^{M\times K}$ is composed of the steering vectors $\{ {\bf a}(\theta_i)=[e^{-j(-(M-1)/2)2\pi \frac{d}{\lambda}sin\theta_i},\dots,e^{-j((M-1)/2)2\pi \frac{d}{\lambda}sin\theta_i}]^T \}_{i=1}^K$ with wavelength $\lambda$, $d$ is the distance between sensors, ${\bf S}=[{\bf s}_1,\cdots,{\bf s}_L]\in \mathbb{C}^{K\times L}$ (${\bf s}_i \in\mathbb{C}^{K\times 1},\forall i$  represents the arriving stochastic signal vector with zero-mean, and covariance matrix $ {\bf C}_s$.) is the signal matrix of interest, and ${\bf N}\in \mathbb{C}^{M\times L}$ is an additive white Gaussian noise matrix whose elements are zero-mean and $\sigma_n^2$-variance. 

 
Matrix ${\bf D} \in \mathbb{C}^{M\times M}$ is parameterized by an unknown parameter vector ${\bf h} \in \mathbb{C}^{m\times 1}$, which captures the unknown calibration of the sensors. The calibration case of interest is when ${\bf D}({\bf h})= \text{diag}({\bf B h})$ is a diagonal matrix in which its diagonal entries represent unknown complex gains for each antenna. ${\bf B} \in \mathbb{C}^{M\times m} (m<M)$ is assumed to be a known matrix, which is used to model the situation when the diagonal elements of ${\bf D}$ change slowly entry-wise \cite{ling2015self}.

\section{Array Self-Calibration with Off-Grid Directions}
We discretized the angle space into a grid of directions,  which are denoted by $\{ \phi_1, \phi_2,\cdots,\phi_N \}$ where $N$ is the number of discrete directions and $N \gg K$. Suppose the actual DoAs $\{ \theta_1, \theta_2, \cdots, \theta_K\}$ belong to the grid of interest represented by $\{ \phi_1, \phi_2,\cdots,\phi_N \}$. Then, Equation (\ref{multi-snap}) can be transformed into a sparse model as:
\begin{align} 
{\bf Y} ={{\bf D}\bar{\bf A}}{\bar{\bf S}} +{\bf N},  ~~~~{\bf D}=\text{diag}(\bf Bh) ,
\end{align}
where $\bar{\bf A}=[{\bf a}(\phi_1),\cdots,{\bf a}(\phi_N)]\in \mathbb{C}^{M\times N}$, and  $\bar{\bf S}=[\bar{\bf s}_1,\cdots,\bar{\bf s}_L]\in \mathbb{C}^{N\times L}$ is a sparse matrix with each column $ \bar{\bf s}_i \in\mathbb{C}^{N\times 1}$ sparse.
However, in reality and with a high probability, the actual DoAs will not belong to the grid so that off-grid errors occur in the model. Thus, using the first order model approximation to account for off-grid directions \cite{zhu2011sparsity}, we have:
\begin{align} \label{sc_offg_mmv_p}
 {\bf Y} &\cong {\bf D}( \bar{\bf A}+\bar{{\bf B}}\Gamma){\bar{\bf S}} +{\bf N} \\ \nonumber
&= {\bf D}(\bar{\bf A}{\bar{\bf S}}+\bar{\bf B}{\bf P}) +{\bf N} \\\nonumber
 &= {\bf D}[\bar{\bf A}, \bar{\bf B}]{\bf X} +{\bf N} \\\nonumber
  &= {\bf D}{\bf G}{\bf X} +{\bf N} , 
\end{align}
where $\bar{{\bf B}}=[\frac{\partial{\bf a}(\phi_1)}{\partial \phi_1},\dots,\frac{\partial{\bf a}(\phi_N)}{\partial \phi_N}]\in {\mathbb C^{M\times N}}$, ${\boldsymbol{\beta}}=[\beta_1,\dots,\beta_N]^T$, $\Gamma=diag(\boldsymbol{\beta})$, ${\bf P}=\Gamma{\bar{\bf S}}$,  ${\bf G}=[\bar{\bf A}, \bar{\bf B}]$, and ${\bf X}=[\bar{\bf S}^T,{\bf P}^T]^T\in {\mathbb C^{2N\times L}}$ with each column a sparse vector ${\bf x}_i=[\bar{\bf s}_i^T,{\bf p}_i^T]^T\in {\mathbb C^{2N\times 1}}$, and ${\bf p}_i={\boldsymbol \beta}_i\odot \bar{\bf s}_i$ where $\odot$ denotes the Hadamard product.  
 It is noted that $N \gg M>m $, and each column ${\bf x}_i$ is $2K$-jointly sparse, which means that $\bar{\bf s}_i^T$ and ${\bf p}_i^T$ have the same non-zero locations.


\subsection{The Proposed Method}


In \cite{hung2017Low}, the joint low rank and sparse matrix recovery is proposed for the on-grid array calibration model.
By considering the off-grid array calibration model (\ref{sc_offg_mmv_p}), one can follow the same approach of \cite{hung2017Low}, supposing noiseless condition to define
\begin{itemize}
\item $
 {\bf Y}_{i,:}=[{ y}_{i,1} , \cdots , { y}_{i,L} ]\\
={\bf b}_i^H [{\bf h}{\bf x}_1^T, \cdots , {\bf h}{\bf x}_L^T]\begin{bmatrix}
      {\bf g}_i & {\bf 0}  & {\bf 0}\\
      {\bf 0} & \ddots & {\bf 0}\\
      {\bf 0} & {\bf 0} & {\bf g}_i 
    \end{bmatrix}={\bf b}_i^H {\tilde{\textbf{\textit X}}} {\tilde{\bf G}_i}  ,
$\\
where ${\bf b}_i$ is the $i$-th column of ${\bf B}^H$, and ${\bf g}^T_i$ is the $i$-th row of ${\bf G}$.
\item $
 {\tilde{\textbf{\textit X}}} \overset{\Delta}{=} {\bf h}[{\bf x}_1^T, \cdots, {\bf x}_L^T]={\bf h}\begin{bmatrix}
      {\bf x}_1 \\
      \vdots  \\
      {\bf x}_L
    \end{bmatrix}^T    \in {\mathbb C}^{m \times 2LN} \\
{\tilde{\bf G}_i}  =\begin{bmatrix}
      {\bf g}_i & {\bf 0}  & {\bf 0}\\
      {\bf 0} & \ddots & {\bf 0}\\
      {\bf 0} & {\bf 0} & {\bf g}_i 
    \end{bmatrix} \in {\mathbb C}^{2LN \times L}
$
\item 
 Linear operator ${\mathcal A}: \mathbb{C}^{m\times 2LN}\rightarrow\mathbb{C}^{M\times L}$ s.t.
$
  {\mathcal A}(\tilde{{\textbf{\textit X}}})   \overset{\Delta}{=}  \{{\bf b}_i^H\tilde{{\textbf{\textit X}}}\tilde{{\bf G}_i} \}_{i=1}^M
$
\item 
Matrix representation $\Phi : ML \times 2mLN$ of $\mathcal A$ such that
\begin{align} \label{remark1}
& \Phi \text{vec}(\tilde{\textbf{\textit X}}) = \text{vec}({\mathcal A}(\tilde{\textbf{\textit X}}))=\text{vec}({\bf Y}^T),\\ \nonumber
 & \Phi  = [ \varphi_1,\cdots,\varphi_i,\cdots,\varphi_M]^H \in {\mathbb C}^{ML \times 2mLN}, \\ \nonumber
& \varphi_i =\tilde{\bf G}_i^* \otimes {\bf b}_i  \in {\mathbb C}^{2mLN\times L}  \nonumber.
\end{align}

\end{itemize}
 Thus, in terms of (\ref{remark1}) and using the property of group sparsity, a  convex optimization problem is formulated as
\begin{align}\label{formulation_1}
&\arg\min_{\tilde{\textbf{\textit X}}} ~  ||\tilde{\textbf{\textit X}}||_* + \lambda ||\tilde{\textbf{\textit X}}||_{2,1}  \\ \nonumber
&\text{subject to }|| \Phi \text{vec}(\tilde{\textbf{\textit X}})-\text{vec}({\bf Y}^T)||_2\leq \eta ,
\end{align}
where the nuclear norm $||\tilde{\textbf{\textit X}}||_*$ is the sum of singular values of matrix $\tilde{\textbf{\textit X}}$, $||\tilde{\textbf{\textit X}}||_{2,1} =\sum_{i=1}^{2LN} \| \tilde{\textbf{\textit X}}_{:,i}\|_2$, $\tilde{\textbf{\textit X}}_{:,i}$ denotes the $i$-th column of $\tilde{\textbf{\textit X}}$, and $\lambda > 0$. 
We note that the column size of $\tilde{\textbf{\textit X}}$ in (\ref{formulation_1}) is double that of $\tilde{\textbf{\textit X}}$ in (16) of \cite{hung2017Low} due to modeling the off-grid errors.
\begin{figure}[tb]
\begin{center}
\includegraphics[width=\columnwidth]{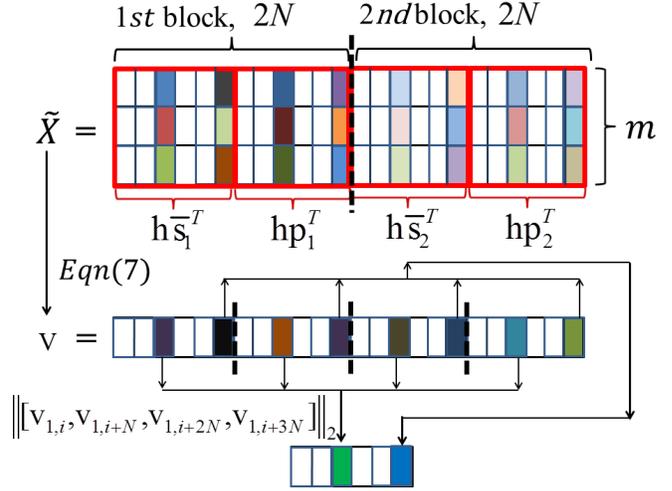}
\caption{Illustration of joint block-sparsity of matrix ${\tilde{\textbf{\textit X}}}$ with $K=2$, $m=3$, $N=6$, and $L=2$ snapshots (so there are two blocks). 
}
\label{fig:block_sparsity}
\end{center}
\end{figure}
\\
Furthermore, each row of $\tilde{\textbf{\textit X}}$  has joint block-sparsity patterns as shown in Figure \ref{fig:block_sparsity}. And since $0\leq |\beta_i|\leq r$ and $r=\frac{|{\phi_i-\phi_{i+1}}|}{2}$ is half the size of the grid interval, we can define a new norm for $\tilde{\textbf{\textit X}}$ to take advantage of the joint block-sparsity property as
\begin{align}
\|\tilde{\textbf{\textit X}}\|_{2,1,2} \overset{\Delta}{=} \| {\bf v} \|_{1,2} , 
\end{align}
where
\begin{align}
{\bf v} \overset{\Delta}{=}  [ \| \tilde{\textbf{\textit X}}_{:,1}  \|_2,  \| \tilde{\textbf{\textit X}}_{:,2}  \|_2 , \cdots,  \| \tilde{\textbf{\textit X}}_{:,2LN}  \|_2] \in {\mathbb R}^{1 \times 2LN},\\
\| {\bf v} \|_{1,2}  \overset{\Delta}{=}  \sum_{i=1}^{N} \| [{\bf v}_{1,i}, {\bf v}_{1,i+N}, \cdots, {\bf v}_{1,i+(2L-1)N}  ] \|_2. 
\end{align}
Note that ${\bf v}$ is a jointly sparse vector, and ${\bf v}_{1,i} =  \| \tilde{\textbf{\textit X}}_{:,i}  \|_2$.
So, we can solve a new convex optimization problem as follows
\begin{align}\label{2_1_2_norm}
&\arg\min_{\tilde{\textbf{\textit X}},{\bf v}} ~~~   ||\tilde{\textbf{\textit X}}||_* + \lambda \|\tilde{\textbf{\textit X}}\|_{2,1,2} \\ \label{ori_ineq_1}
&\text{subject to } || \Phi \text{vec}(\tilde{\textbf{\textit X}})-\text{vec}({\bf Y}^T)||_2\leq \eta  \\ \label{ori_ineq_2}
&~~~~~~~~~~~~~~{\bf v} = [ \| \tilde{\textbf{\textit X}}_{:,1}  \|_2,  \| \tilde{\textbf{\textit X}}_{:,2}  \|_2 , \cdots,  \| \tilde{\textbf{\textit X}}_{:,2LN}  \|_2]  \\ \label{ori_ineq_3}
&~~~~~~~~~~~~~~{\bf v}\geq 0   \\ \label{ori_ineq_4}
&~~~~~~~~~~~~~~{\bf v}_{1,l N+1:(l+1)N} \leq r{\bf v}_{1,(l-1)N+1:l N},  \\   \nonumber  
&~~~~~~~~~~~~~~~~~~~~~~~~~~~~~~~~~~~~\forall l = 1, 3, \cdots, (2L-1)     \nonumber
\end{align}
The last two new constraints (\ref{ori_ineq_3}) and (\ref{ori_ineq_4}) result from the positivity property of the norm, and the prior knowledge of  ${\bf p}_i={\boldsymbol \beta}_i\odot \bar{\bf s}_i$. Further, instead of empirically choosing a regularization parameter $\lambda$, the objective functions can be relaxed by only using $\|\tilde{\textbf{\textit X}}\|_{2,1,2}$. The following corollary gives the guarantee for the choice of the relaxed objective function.
\begin{corollary} \label{cor1}
$\sqrt{2mL} \|\tilde{\textbf{\textit X}}\|_{2,1,2}\geq \| \tilde{\textbf{\textit X}} \|_1 \geq \| \tilde{\textbf{\textit X}}\|_*$
\end{corollary}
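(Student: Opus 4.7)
The plan is to prove the two inequalities separately, each by a fairly standard argument.

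For the right-hand inequality $\|\tilde{\textbf{\textit X}}\|_1 \geq \|\tilde{\textbf{\textit X}}\|_*$, I would interpret $\|\tilde{\textbf{\textit X}}\|_1$ as the entry-wise $\ell_1$ norm and use the atomic decomposition $\tilde{\textbf{\textit X}} = \sum_{i,j} \tilde{X}_{ij}\,\mathbf{e}_i\mathbf{e}_j^T$. Each rank-one matrix $\mathbf{e}_i\mathbf{e}_j^T$ has nuclear norm equal to $1$, so the triangle inequality for the nuclear norm gives
\begin{equation*}
\|\tilde{\textbf{\textit X}}\|_* \leq \sum_{i,j} |\tilde{X}_{ij}|\,\|\mathbf{e}_i\mathbf{e}_j^T\|_* = \sum_{i,j}|\tilde{X}_{ij}| = \|\tilde{\textbf{\textit X}}\|_1.
\end{equation*}

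For the left-hand inequality $\sqrt{2mL}\,\|\tilde{\textbf{\textit X}}\|_{2,1,2} \geq \|\tilde{\textbf{\textit X}}\|_1$, I would apply Cauchy--Schwarz twice. First, within each column, since $\tilde{\textbf{\textit X}}_{:,j} \in \mathbb{C}^m$, we have $\|\tilde{\textbf{\textit X}}_{:,j}\|_1 \leq \sqrt{m}\,\|\tilde{\textbf{\textit X}}_{:,j}\|_2 = \sqrt{m}\,\mathbf{v}_{1,j}$, and summing over all $2LN$ columns yields $\|\tilde{\textbf{\textit X}}\|_1 \leq \sqrt{m}\,\|\mathbf{v}\|_1$. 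Second, regroup the entries of $\mathbf{v}$ into $N$ sub-vectors of length $2L$ indexed by $i,\,i+N,\,\ldots,\,i+(2L-1)N$, and apply Cauchy--Schwarz to each sub-vector to get
\begin{equation*}
\sum_{k=0}^{2L-1} \mathbf{v}_{1,i+kN} \leq \sqrt{2L}\,\bigl\|[\mathbf{v}_{1,i},\mathbf{v}_{1,i+N},\ldots,\mathbf{v}_{1,i+(2L-1)N}]\bigr\|_2.
\end{equation*}
Summing over $i=1,\ldots,N$ gives $\|\mathbf{v}\|_1 \leq \sqrt{2L}\,\|\mathbf{v}\|_{1,2} = \sqrt{2L}\,\|\tilde{\textbf{\textit X}}\|_{2,1,2}$. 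Chaining the two bounds produces the $\sqrt{2mL}$ factor.

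I do not anticipate any serious obstacle here; both inequalities are essentially direct consequences of Cauchy--Schwarz and the triangle inequality applied to the appropriate atomic decomposition. The only point that requires care is bookkeeping the indexing of $\mathbf{v}$ consistently with the block structure described in the statement (each of the $N$ angular grid points contributes $2L$ entries to $\mathbf{v}$, one from $\bar{\mathbf{S}}$ and one from $\mathbf{P}$, across $L$ snapshots), so that the factor $\sqrt{2L}$ comes out correctly and the row dimension $m$ of $\tilde{\textbf{\textit X}}$ supplies the remaining $\sqrt{m}$.
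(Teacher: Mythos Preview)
Your proposal is correct and follows essentially the same route as the paper: the right inequality is obtained from the atomic decomposition $\tilde{\textbf{\textit X}}=\sum_{i,j}\tilde{X}_{ij}\,{\bf e}_i{\bf e}_j^H$ together with the definition/triangle inequality for the nuclear norm, and the left inequality is the standard $\sqrt{n}\|\cdot\|_2\geq\|\cdot\|_1$ bound applied to the $2mL$-dimensional blocks $\tilde{\textbf{\textit x}}_i=[\tilde{\textbf{\textit X}}_{:,i}^T,\ldots,\tilde{\textbf{\textit X}}_{:,i+(2L-1)N}^T]$. The only cosmetic difference is that you split the latter into two Cauchy--Schwarz steps ($\sqrt{m}$ within each column, then $\sqrt{2L}$ across the $2L$ columns of each group) rather than applying it once to the concatenated $2mL$-vector, which yields the identical factor $\sqrt{2mL}$.
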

\begin{proof}
The nuclear norm (Schatten 1-norm) \cite{bhatia2013matrix,watrous2011theory} of matrix $\tilde{\textbf{\textit X}}$ is the sum of its singular values. By performing SVD on $\tilde{\textbf{\textit X}}= {\bf U}{\bf \Sigma} {\bf W}^H$, one can define
\begin{align}\nonumber
||\tilde{\textbf{\textit X}}||_*   \overset{\Delta}{=}  \text{inf} \{ \sum_i |  c_i | :  \tilde{\textbf{\textit X}} = \sum_i  c_i {\bf u}_i {\bf w}_i^H, \| {\bf u}_i  \| = \| {\bf w}_i  \|  =1  \},
\end{align}
where ${\bf u}_i$ and ${\bf w}_i$ are $i$-th column of unitary matrix $ {\bf U}$ and $ {\bf W}$, respectively. $c_i$ is $i$-th diagonal element of $\bf \Sigma$.
Note that $\| \tilde{\textbf{\textit X}} \|_1 = \sum_{i=1}^m \sum_{j=1}^{2LN} | \tilde{\textbf{\textit X}}_{i,j} |$, and one also can make rank one decomposition of $ \tilde{\textbf{\textit X}}$ as
\begin{align}\nonumber
  \tilde{\textbf{\textit X}}  = \sum_{i=1}^{m} \sum_{j=1}^{2LN} \tilde{\textbf{\textit X}}_{i,j} {\bf e}_i {\bf e}_j^H
\end{align}
 in terms of standard basis vectors $\{{\bf e}_i \}$. Then, we have $ \| \tilde{\textbf{\textit X}} \|_1 \geq \| \tilde{\textbf{\textit X}}\|_*$. 
We also can make $\| \tilde{\textbf{\textit X}} \|_1 = \sum_{i=1}^N \sum_{j=1}^{2mL} | \tilde{\textbf{\textit X}}_{i,j} |$, and recall that 
\begin{align}\label{2_1_2_exp} \nonumber
\|\tilde{\textbf{\textit X}}\|_{2,1,2} &\overset{\Delta}{=} \| {\bf v} \|_{1,2} =\sum_{i=1}^{N} \| [{\bf v}_{1,i}, {\bf v}_{1,i+N}, \cdots, {\bf v}_{1,i+(2L-1)N}  ] \|_2 \\  
&= \sum_{i=1}^{N} \| [ \tilde{\textbf{\textit X}}_{:,i}^T , \tilde{\textbf{\textit X}}_{:,i+N}^T , \cdots , \tilde{\textbf{\textit X}}_{:,i+(2L-1)N}^T  ] \|_2  ,
\end{align}
where we denote $\tilde{\textbf{\textit x}}_i = [ \tilde{\textbf{\textit X}}_{:,i}^T , \tilde{\textbf{\textit X}}_{:,i+N}^T , \cdots , \tilde{\textbf{\textit X}}_{:,i+(2L-1)N}^T  ] \in {\mathbb C}^{1\times 2mL}, \forall i$. Since for any vector ${\bf x} \in {\mathbb C}^n$,  $\sqrt{n} \| {\bf x} \|_2 \geq \|{\bf x} \|_1$, it follows that $\sqrt{2mL} \|\tilde{\textbf{\textit X}}\|_{2,1,2}\geq \| \tilde{\textbf{\textit X}} \|_1$, because the size of $\tilde{\textbf{\textit x}}_i $ is $2mL$.
\end{proof}

After $\tilde{\textbf{\textit X}}$ is estimated, SVD is used to obtain its eigenvector with the largest eigenvalue, which will be the best estimate of ${\bf h}$ and ${\bf x}$. However, since ${\bf x} =[\bar{\bf s}^T,{\bf p}^T]^T\in {\mathbb C^{2N\times 1}}$ is complex-valued and sparse, we only can compute the absolute value of the off-grid DoA $|\beta_i| = \frac{|p_i|}{|\bar{s}_i|}$ for non-zero $\bar{s}_i$ in terms of ${\bf p}_i={\boldsymbol \beta}_i\odot \bar{\bf s}_i$. In order to recover the sign of the off-grid deviation, all $2^K$ cases of the sign of ${\boldsymbol \beta}$ for $K$ known or detected sources must be considered. In order to determine the best estimate of the sign of off-grid DoA ${\boldsymbol \beta}$, one can calculate $ || \Phi \text{vec}(\tilde{\textbf{\textit X}})-\text{vec}({\bf Y}^T)||_2$ for all $2^K$ cases, and choose the best ${\boldsymbol \beta}$ with the minimum value. (Remember that  $\tilde{\textbf{\textit X}}_i={\bf h} {\bf x}^T_i$, ${\bf x}_i=[\bar{\bf s}_i^T,{\bf p}_i^T]^T $, and ${\bf p}_i={\boldsymbol \beta}_i\odot \bar{\bf s}_i$.) 

\subsection{Derivation of SOCP Representation}
As mentioned earlier and supported by Corollary \ref{cor1}, we can relax (\ref{2_1_2_norm}) to only minimize the objective $\|\tilde{\textbf{\textit X}}\|_{2,1,2}$. Then, this can be reformulated into second-order cone programming (SOCP)  \cite{boyd2009convex,ben2001lectures,malioutov2005} as follows:
\begin{align} \nonumber
&\arg\min_{\tilde{\textbf{\textit X}},{\bf v},{\bf z},{\bf b},q} ~~~  q \\  \label{ineq_1}
&\text{subject to  }~ || {\bf z} ||_2\leq \eta, ~~ \Phi \text{vec}(\tilde{\textbf{\textit X}})-\text{vec}({\bf Y}^T) = {\bf z}  \\ \label{ineq_2}
&~~~~~~~~~~~~~~ \| \tilde{\textbf{\textit X}}_{:,k}  \|_2 \leq {\bf v}_{1,k} , \forall k = 1, \cdots, 2LN  \\  \label{ineq_3}
&~~~~~~~~~~~~~~{\bf 1}^T{\bf b}\leq q   \\ \label{ineq_4}
&~~~~~~~~~~~~~~\sqrt{ \sum_{j=0}^{2L-1} \| \tilde{\textbf{\textit X}}_{:,i+jN} \|_2^2 }\leq b_i, \forall i=1,\cdots,N   \\ \nonumber
&~~~~~~~~~~~~~~{\bf v}\geq 0   \\ \nonumber
&~~~~~~~~~~~~~~{\bf v}_{1,l N+1:(l+1)N} \leq r{\bf v}_{1,(l-1)N+1:l N},  \\   \nonumber  
&~~~~~~~~~~~~~~~~~~~~~~~~~~~~~~~~~~~~\forall l = 1, 3, \cdots, (2L-1)     \nonumber
\end{align}
where ${\bf 1} \in {\mathbb R}^{N\times 1}$ is an all-one vector. 
\begin{proof}
The auxiliary vector $\bf z$ is used to replace the term $\Phi \text{vec}(\tilde{\textbf{\textit X}})-\text{vec}({\bf Y}^T)$ in (\ref{ori_ineq_1}) to create an equality, and a second-order cone in (\ref{ineq_1}). The auxiliary variable $q$ and vector $\bf b$ are used to replace $\|\tilde{\textbf{\textit X}}\|_{2,1,2}$.
Recall the definition of the objective function from (\ref{2_1_2_exp}) so that we have $\|\tilde{\textbf{\textit X}}\|_{2,1,2}=\sum_{i=1}^N \sqrt{ \sum_{j=0}^{2L-1} \| \tilde{\textbf{\textit X}}_{:,i+jN} \|_2^2 } $. In order to satisfy the general form of SOCP, one can use $q$ and $\bf b$ to relax it by using linear constraint (\ref{ineq_3}) and nonlinear constraint (\ref{ineq_4}), which infer to  $\|\tilde{\textbf{\textit X}}\|_{2,1,2} \leq {\bf 1}^T{\bf b}\leq q$. We also relax the feasible set of constraint (\ref{ori_ineq_2}) by using inequalities in (\ref{ineq_2}).
\end{proof}
SOCP can be solved by interior point methods, and the computational complexity ${\mathcal O}(m^{3.5}(2LN)^{3.5})$ is equal to interior point implementation cost ${\mathcal O}(m^{3}(2LN)^{3})$ per iteration times iteration complexity ${\mathcal O}(m^{0.5}(2LN)^{0.5})$.


\begin{figure*}[htb]
  \centering
  \subfigure[ES]{\includegraphics[width=\columnwidth,height=13.5em]{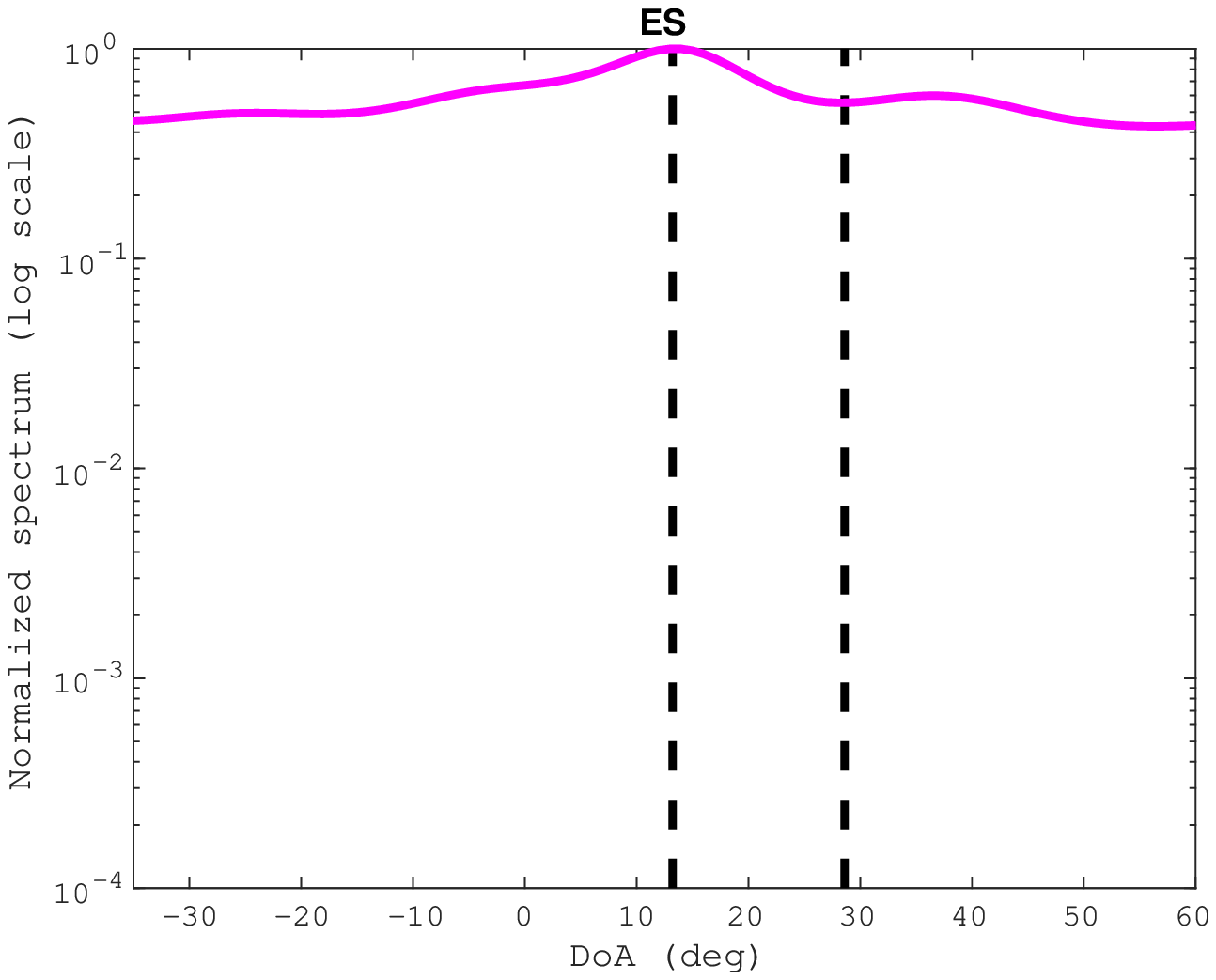}}\hfill
  \subfigure[Ling]{\includegraphics[width=\columnwidth,height=13.5em]{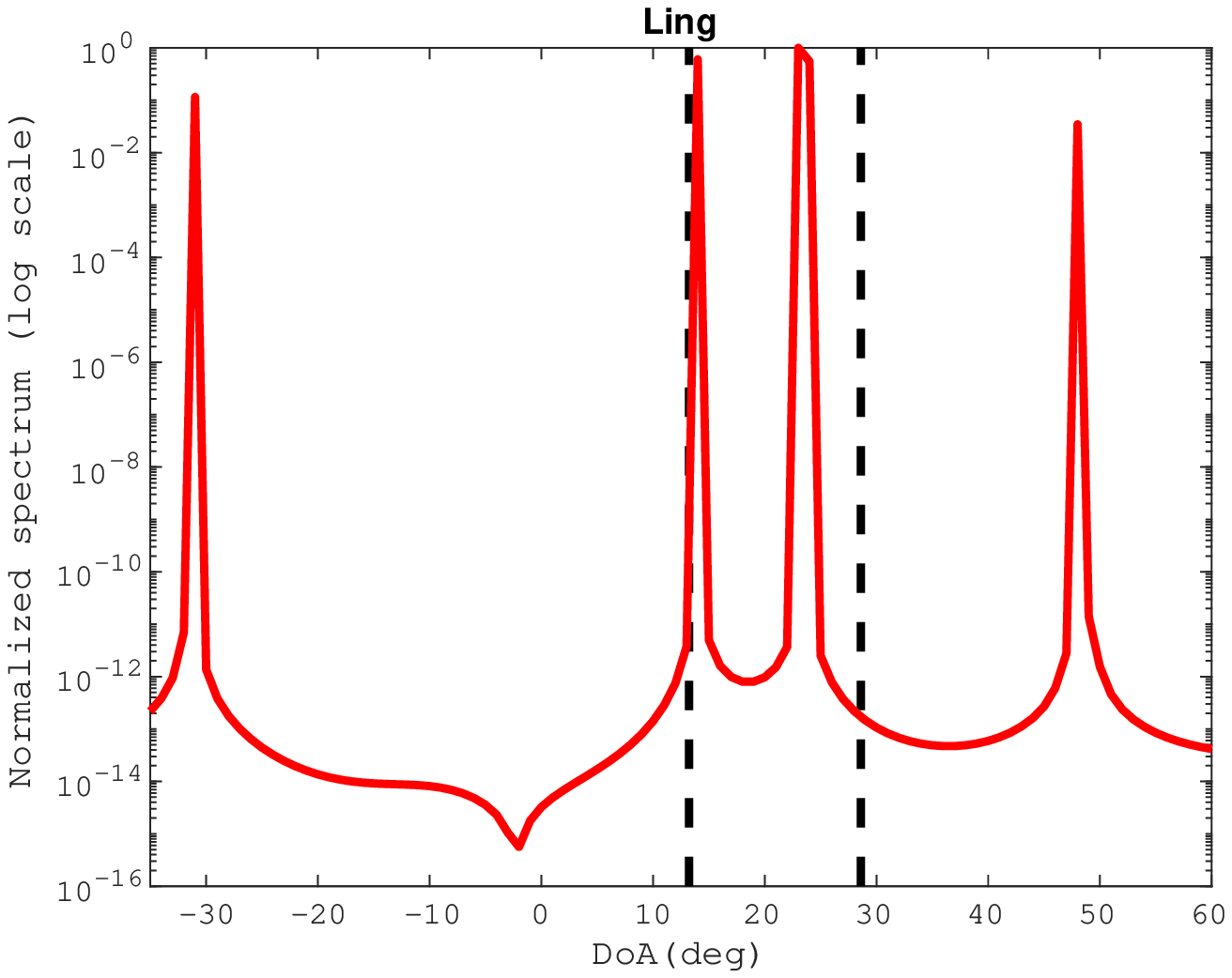}}\hfill
  \subfigure[MMV-SC]{\includegraphics[width=\columnwidth,height=13.5em]{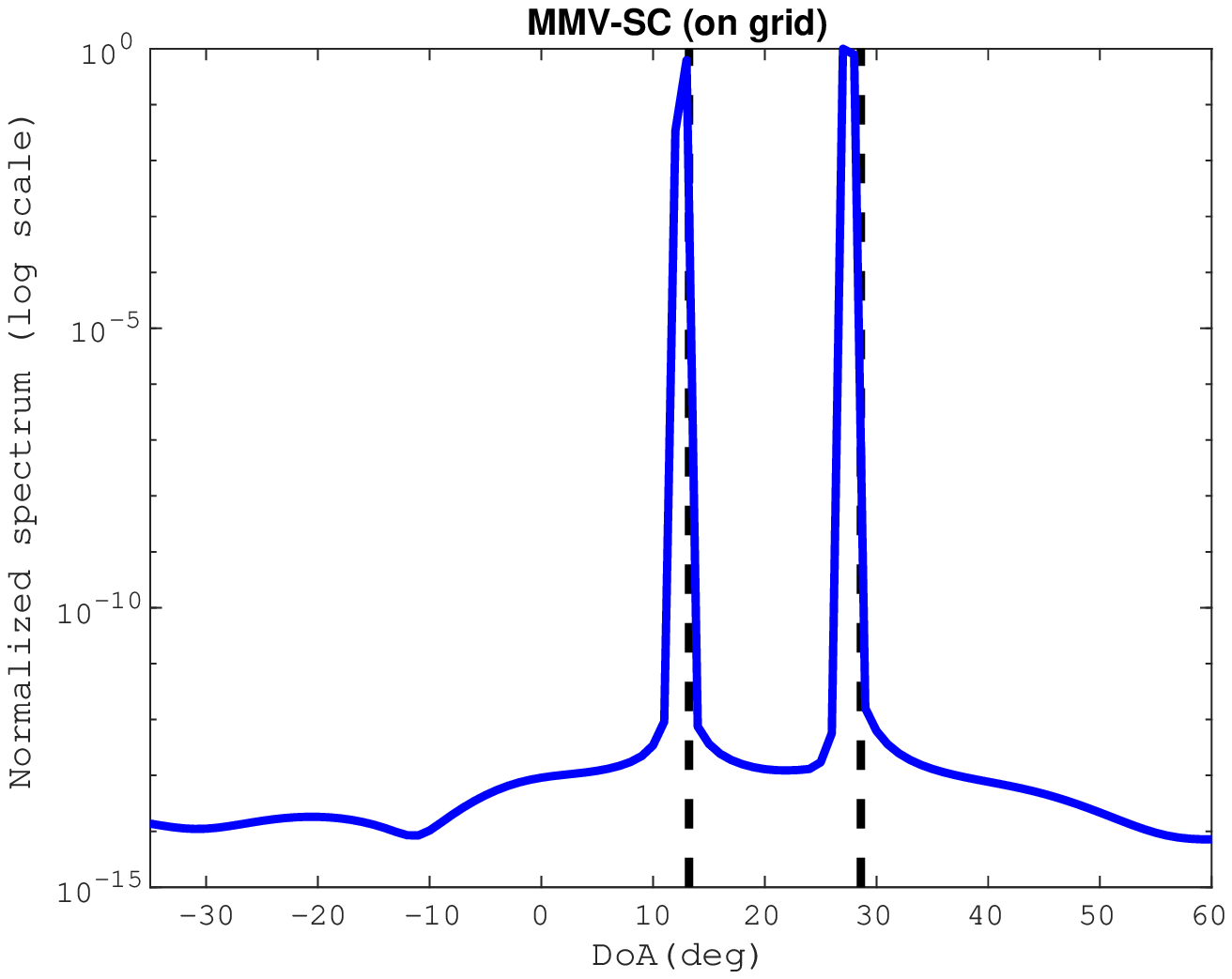}}\hfill
  \subfigure[Proposed]{\includegraphics[width=\columnwidth,height=13.5em]{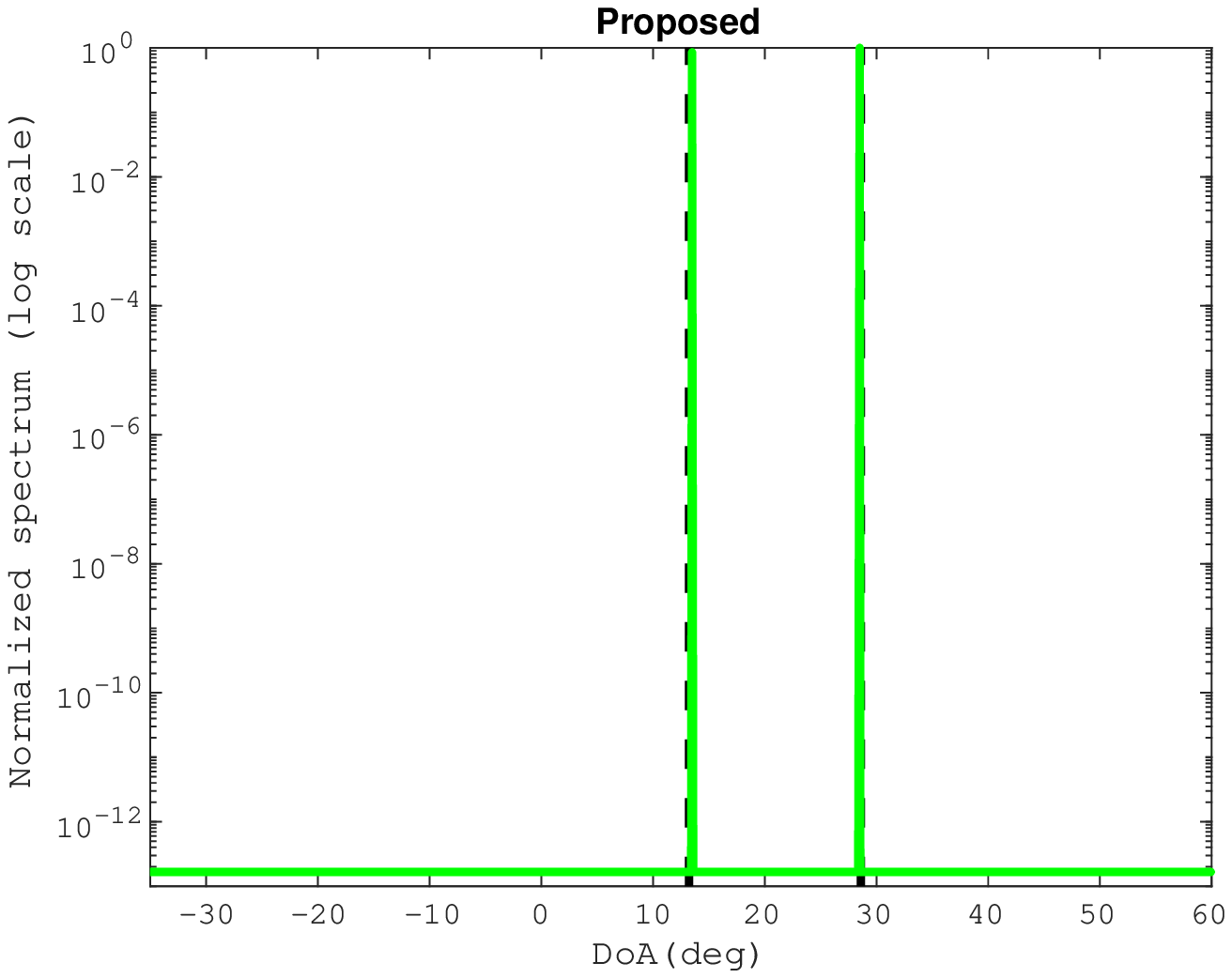}}\hfill
\caption{Performance of DoA resolution at SNR = 10 dB, $M=8$.} \label{fig:compare}
\end{figure*}

\section{Numerical Results}

In this section, numerical simulations are used to compare the performance of the proposed method with CRB \cite{friedlander2014array}, the eigenstructure (ES) method \cite{weiss1990eigenstructure},  Ling's method \cite{ling2015self}, and MMV-SC \cite{hung2017Low}.
A ULA of $M=8$ sensors with $d/\lambda=0.5$ is considered. There are $K=2$ far-field plane waves from the actual DoAs ${\boldsymbol \theta}$. We consider the off-grid case with ${\boldsymbol \theta}=[13.2220^{\circ}, 28.6022^{\circ}]$. 
 Narrowband, zero-mean, and uncorrelated sources for the plane waves are assumed, and the noise is AWGN with zero-mean and unit variance. The DoA search space is discretized from $-90^{\circ}$ to $90^{\circ}$ with $1^{\circ}$ separation, so $N=180$. The number of snapshots is set to $L=100$. The value of $r$ is set to $0.5^{\circ}$. Calibration error ${\bf d}$ is given by ${\bf d}={\bf Bh}$, where ${\bf B}\in \mathbb{C}^{M\times m}$, whose columns are the first $m=4$ columns of $M\times M$  Discrete Fourier Transform (DFT) matrix.   
We define the root mean square error (RMSE) of DoAs estimation as $(E[\frac{1}{K}\| \hat{{\boldsymbol \theta}} - {\boldsymbol \theta} \|_2^2])^{\frac{1}{2}}$. 
One hundred  realizations  are performed at each SNR.

\begin{figure}[htb]
\begin{center}
\includegraphics[width=\columnwidth]{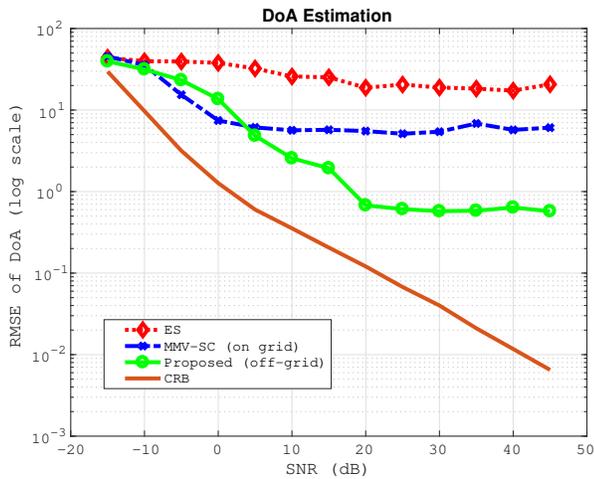}
\caption{RMSE of off-grid DoA estimation versus SNR, $M=8$.} \label{fig:ch5_RMSE_SNR_offG}
\end{center}
\end{figure}

\subsection{DoA Resolution Performance}
In the first numerical experiment, the resolution test is performed to verify the ability of estimating two closely located DoAs at SNR $=10$ dB by inspecting the normalized spectra. In Figure \ref{fig:compare}, the proposed method outperforms Ling's method due to the benefit of multiple snapshots. The proposed method also performs better than MMV-SC due to accounting for the off-grid effect. The ES method only detects one DoA of the sources.

\subsection{Off-Grid DoA Estimation Accuracy}
The relative performance of the proposed method for the off grid DoAs is demonstrated as a function of signal-to-noise ratio in Figure \ref{fig:ch5_RMSE_SNR_offG}. The proposed method outperforms the ES method at each SNR, and is also better than MMV-SC method (on grid,) which does not consider the off-grid effect. However, the RMSE performance of the proposed method saturates when SNR $\geq 20$ dB, implying that the accuracy of off-grid DoA estimation is bounded at high SNRs. The first order approximation to account for the off-grid errors is not sufficient. Hence, even the proposed method remains far away from the CRB.

\section{Conclusion}
In this work, we extended our previous work to the off-grid case, and explored the perturbation structure brought by the off-grid effect and additional information from multiple measurement vectors. We exploited the joint block-sparsity structure of $\tilde{\textbf{\textit X}}$ to  improve the accuracy performance of DoA estimation by formulating a  new convex optimization problem with nonlinear/linear inequalities constraints. The SOCP representation of the optimization problem was also derived such that efficient interior point methods can be used to obtain the numerical solution. We demonstrated the performance of the proposed method by numerical simulations.  


%
%
%
%
%
%

\ifCLASSOPTIONcaptionsoff
  \newpage
\fi



%

%
%
%

\bibliographystyle{IEEEtran}
\bibliography{IEEEabrv,mybib,mybib_ch5}

\end{document}